\documentclass[a4paper,fleqn]{cas-dc}

\usepackage{amsthm}   
\usepackage[authoryear]{natbib}

\usepackage{tikz}
\usetikzlibrary{arrows.meta,positioning}

\ExplSyntaxOn
\bool_gset_true:N \g_stm_nologo_bool
\ExplSyntaxOff

\newtheorem{hypothesis}{Hypothesis}
\newtheorem{theorem}{Theorem}
\newtheorem{corollary}{Corollary}
\newtheorem{definition}{Definition}
\theoremstyle{remark}
\newtheorem{remark}{Remark}
\theoremstyle{plain}

\begin{document}


\shorttitle{Quantum Cryptographic Exposure under HNDL Threat}
\shortauthors{Rufino, Marcelino, and Garcia}


\title[mode=title]{A Formal Basis for Quantum Cryptographic
  Exposure Measurement under HNDL Threat}

\author[1]{Matheus Rufino\corref{cor1}}[orcid=0000-0003-4871-5120]
\ead{math.rufi@gmail.com}

\author[1]{Rafael Duarte Marcelino}[orcid=0009-0008-2081-6812]

\author[1,2]{Julio Smanioto Garcia}[orcid=0009-0006-7890-5821]

\address[1]{GWK Security, Campinas, SP, Brazil}
\address[2]{Instituto de F\'{i}sica Gleb Wataghin, Universidade Estadual de
  Campinas (UNICAMP), Campinas, SP, Brazil}

\cortext[cor1]{Corresponding author.}

\begin{abstract}
An adversary copies your encrypted traffic today and waits for a
quantum computer to decrypt it later.
How exposed are you?

We show that the functional form of the answer is not merely a
calibration choice --- it is structurally justified by three
assumptions about adversarial production and value-decay dynamics.
Under those assumptions, the HNDL compromise probability factorises
into a temporal hazard, a multiplicative cryptographic-vulnerability
and operational-exposure term, and a saturation denominator governed
by the defense--attack intensity ratio;
the marginal sensitivity to each dimension is endogenous to the
organisation's position in the vulnerability--exposure plane,
not a fixed global constant.
Additive scoring frameworks cannot reproduce this structure because
the interaction between cryptographic vulnerability and operational
exposure is absent by construction, regardless of calibration.
The resulting framework provides a structurally grounded basis for
operational HNDL exposure prioritisation under partial observability.
\end{abstract}

\begin{keywords}
HNDL attack \sep post-quantum cryptography \sep cryptographic exposure
index \sep contest success function \sep multiplicative risk model \sep
composite indicator
\end{keywords}

\maketitle

\section{Introduction}

The HNDL threat has a simple structure.
An adversary copies your encrypted traffic today and stores it.
When a cryptographically relevant quantum computer (CRQC) becomes
available, the stored traffic becomes readable.
This is the Harvest-Now-Decrypt-Later (HNDL) attack~\citep{joseph2022,mosca2018}.

The relevant question is not whether your cipher is breakable today.
It is whether it will be broken before the data loses its value.
NIST has finalized its first post-quantum standards~\citep{nist8547,nistfips2024}.
CISA, NSA, and NIST have published migration frameworks~\citep{cisa2023};
the European Union Agency for Cybersecurity has published a concurrent PQC
readiness survey~\citep{enisa2021}.
Organizations face a concrete decision: how urgently must they migrate?
The answer requires a score that measures HNDL exposure, not merely
cryptographic age.

Operational HNDL assessment is also constrained by partial
observability: external attack-surface signals, incomplete
cryptographic inventories, and sector-specific data lifetimes must be
mapped into a coherent exposure model.
The question addressed here is the structural form such a model must
have before any implementation-specific calibration is attempted
(Figure~\ref{fig:pipeline}).

\begin{figure}[t]
\centering
\begin{tikzpicture}[
  box/.style={draw, rounded corners=3pt, minimum width=4.8cm,
               minimum height=0.72cm, align=center, font=\small},
  side/.style={draw, rounded corners=3pt, minimum width=1.9cm,
                minimum height=0.72cm, align=center, font=\small},
  arr/.style={-{Stealth[length=4pt]}, thick},
  node distance=0.52cm and 0.55cm
]
  \node[box] (sig) {Observable \& declarative signals};
  \node[box, below=of sig] (map) {Structural mapping to $V$,\;$E$};
  \node[box, below=of map] (csf) {Proportional-hazards contest};
  \node[box, below=of csf] (ieq) {HNDL exposure score $\in[0,100]$};
  \node[side, right=of csf]  (h)  {Temporal\\hazard $H$};
  \draw[arr] (sig) -- (map);
  \draw[arr] (map) -- (csf);
  \draw[arr] (csf) -- (ieq);
  \draw[arr] (h)   -- (csf);
\end{tikzpicture}
\caption{Structural organisation of the HNDL exposure assessment
  problem.
  Observable and declarative signals are mapped to the structural
  variables $V$ (vulnerability fraction) and $E$ (operational
  exposure); the proportional-hazards contest
  (Hypotheses~\ref{hyp:poisson}--\ref{hyp:composition}) and
  temporal hazard $H$ (Definition~1) produce the exposure score of
  Eq.~\eqref{eq:IEQ}.
  The contribution of this paper is the structural form of the lower
  two stages; the signal-to-$(V,E)$ mapping is
  implementation-specific.}
\label{fig:pipeline}
\end{figure}

The natural instinct is to add things up.
Weight the vulnerability, weight the exposure, maybe add a governance
factor.
Security scoring approaches of this type are widespread~\citep{gordon2002,oecd2008}.
The OECD composite-indicator handbook~\citep{oecd2008} calls the
underlying assumption \emph{full compensability}: a deficit in one
dimension is perfectly offset by a surplus in another.
Under this assumption, vulnerability and exposure are substitutes.
Under the HNDL model, they are complements.
An organization with high vulnerability but zero external exposure
cannot be harvested.
An organization with zero quantum-vulnerable cryptography has nothing
to harvest.
Both dimensions must be present simultaneously for the attack to work.
No choice of weights in an additive score recovers this interaction.

This paper characterizes the functional form an HNDL exposure score
must take.
Three structural axioms support the multiplicative form.
Section~\ref{sec:setup} introduces the four model quantities.
Section~\ref{sec:hypotheses} states the three structural hypotheses.
Theorem~\ref{thm:main} (Section~\ref{sec:main}) derives the
model-implied form and the operational index.
Corollary~\ref{cor:degenerate} (Section~\ref{sec:consequences}) shows
that no additive score can preserve the required $V \times E$
interaction structure.
Section~\ref{sec:empirical} reports an internal specification diagnostic
and Section~\ref{sec:limitations} states the main limitations.
The contribution is structural rather than implementation-specific:
we characterize the admissible functional form of an HNDL exposure
score and show why additive or ordinal composites cannot preserve
the required interaction structure.
We refer to this score as the \emph{Quantum Exposure Index}
(IEQ).\footnote{IEQ is the acronym of the Portuguese
  \emph{{\'{I}}ndice de Exposi\c{c}\~ao Qu\^{a}ntica}, the operational
  name of the deployed implementation surveyed in
  Section~\ref{sec:empirical}.}

\section{Setup}
\label{sec:setup}

Four quantities characterize an organization under the HNDL threat.

$V \in (0,1]$ is the \emph{quantum vulnerability fraction}: the share of
its cryptographic attack surface using algorithms breakable by
Shor's algorithm --- RSA, ECDH, ECDSA, and DSA: algorithms based
on integer factorization or discrete logarithm
problems~\citep{nist8547,etsi_qsc001}.
A value of $V = 1$ means every cryptographic endpoint is
quantum-vulnerable.

$E \in (0,1]$ is the \emph{operational exposure}: how accessible
that surface is to an external adversary.
High $E$ means the encrypted traffic is observable and storable.
Low $E$ means the adversary cannot reach it.

$T_D > 0$ is the \emph{adversarial shelf life}: how long the captured
ciphertext retains strategic value after harvest.
A trade secret may be sensitive for decades~\citep{joseph2022}.
A stock price tip expires in days.

$\mu > 0$ denotes the effective rate at which already-harvested
ciphertext loses strategic exploitability --- through data-value decay,
lifecycle controls, key and material rotation, cryptographic remediation,
and other mechanisms that reduce the usefulness of the harvested
material~\citep{barker2020,nist8547}.
PQC migration is one component of this process and, separately, limits
the rate of future harvests.

Let $T_Q$ be the (random) time at which a CRQC becomes available.
The quantity of interest is the probability that the CRQC arrives
before the harvested data loses its strategic value \emph{and}
exploitation precedes remediation --- the HNDL compromise probability,
denoted $P_{\mathrm{HNDL}}$. If the CRQC does not arrive within
the data's adversarial horizon, the harvested ciphertext remains safe.

\section{Hypotheses}
\label{sec:hypotheses}

\begin{hypothesis}[Competing exponential processes]
\label{hyp:poisson}
  Think of the attacker and defender as running two races.
  The attacker tries to exploit the harvested data before the defender
  renders captured ciphertext strategically valueless through data
  expiry, operational rekeying, and lifecycle management.
  We model both races as constant-rate processes:
  \begin{equation}
    \lambda_A = \lambda_0\, V^a E^b, \qquad \lambda_D = \mu,
    \label{eq:rates}
  \end{equation}
  with $a, b > 0$ and $\lambda_0 > 0$ the baseline adversarial
  intensity.
  The multiplicative structure follows directly from the
  \emph{intersection principle}: a cryptographic compromise requires
  the \emph{simultaneous} occurrence of two approximately independent
  conditions --- the adversary can operationally reach the asset ($E$)
  and that asset uses quantum-vulnerable cryptography ($V$).
  Under the approximate cross-sectional independence of
  Hypothesis~\ref{hyp:independence},
  \[
    P(\text{reach} \cap \text{vulnerable})
      \;\approx\; E^b \times V^a,
  \]
  where the exponents encode structural properties of the
  infrastructure: $a \geq 1$ reflects concentration of critical assets
  (HSMs, CAs, TLS gateways) in the vulnerable subset, and $0 < b < 1$
  reflects attack-surface saturation yielding diminishing marginal
  returns to additional exposure vectors.
  Axiomatically, this family is further supported by three structural
  axioms established by \citet{skaperdas1996} for contest success
  functions:
  \begin{enumerate}
    \item[(A1)] \textit{Anonymity}: the attack rate depends only on
      $V$ and $E$, not on who the organization is.
    \item[(A2)] \textit{Independence of irrelevant alternatives}: the
      ratio of attack rates between two organizations depends only on
      their own $(V, E)$ pairs, not on the rest of the population.
    \item[(A3)] \textit{Homogeneity}: the attack rate is homogeneous
      of positive degree in $(V, E)$ jointly --- the model does not
      require constant returns to scale, and the degree $a+b$ is a
      structural parameter (baseline priors $a = 1.0$, $b = 0.5$;
      the structural results hold for any $a, b > 0$).
  \end{enumerate}
  The multiplicative form $\lambda_A = \lambda_0 V^a E^b$ is imposed
  by the intersection principle above, under the approximate
  independence of reachability and cryptographic vulnerability stated
  in Hypothesis~\ref{hyp:independence}.
  The contest-success framework of \citet{skaperdas1996} further
  supports the proportional contest form once the attack-production
  effort $\lambda_A$ is specified: under (A1)--(A3), the admissible
  class of contest success functions (CSF) over $\lambda_A$ is the power
  ratio form $f(\lambda_A)/\sum_j f(\lambda_{A,j})$ with
  $f(\lambda_A) = \alpha\lambda_A^m$.
  The axioms (A1)--(A3) do not, by themselves, derive the internal
  $(V,E)$ decomposition of the effort variable.
  In the CSF framework, $\lambda_A = \lambda_0 V^a E^b$ plays the role
  of the effort variable $y_i$, decomposed into its cryptographic
  ($V$) and operational ($E$) components.
  A direct consequence is a strictly positive cross-partial
  $\partial^2\lambda_A/\partial V\,\partial E > 0$: vulnerability
  and exposure are complements in attack production, not
  substitutes.%
  \footnote{Baseline values $a = 1.0$ (neutral sensitivity, canonical
  lottery case~\citep{tullock1980,skaperdas1996}) and $b = 0.5$
  (hypothesized saturation in adversarial accessibility,
  cf.~\citep{gordon2002}) serve as structural priors; the
  characterization of Sections~\ref{sec:main}--\ref{sec:consequences}
  holds for any $a, b > 0$.}
\end{hypothesis}

\begin{remark}[Structural analogy with non-extensive statistical mechanics]
\label{rem:tsallis}
  The contest outcome in Hypothesis~\ref{hyp:poisson} admits a
  structural analogy with non-extensive statistical mechanics.
  Writing $u = V^a E^b$, the probability that the defender wins is
  \begin{equation}
    P_D \;=\; \frac{\theta}{u + \theta}
      \;=\; e_{q=2}^{-u/\theta},
    \label{eq:tsallis}
  \end{equation}
  where $e_q^{-x} = [1+(q-1)x]^{-1/(q-1)}$ is the Tsallis
  $q$-exponential~\citep{tsallis1988}.
  The value $q = 2$ is not a fitted parameter; it is determined by
  the binary contest structure of Hypothesis~\ref{hyp:composition}
  and corresponds to $d = -1$ in the $d$-Arrhenius
  parameterization~\citep{aquilanti2010,aquilanti2017}.
  The connection is suggestive rather than constitutive:
  the IEQ does not derive from non-extensive thermodynamics, and the
  analogy carries no physical interpretation beyond the contest structure
  of Hypotheses~\ref{hyp:poisson}--\ref{hyp:composition}.
  It situates the defender-win attenuation of Eq.~\eqref{eq:tsallis}
  within a known class of $q$-exponential
  systems~\citep{vallianatos2009,rufino2022};
  formal unification is left for subsequent work.
  The epistemological point, however, is precise: the Tsallis form at
  $q = 2$ is not imported from non-extensive thermodynamics as a
  primitive --- it is consistent with the operational contest structure
  (A1)--(A3), which imposes admissibility conditions solely on
  observable outcomes.
  Non-standard statistical forms that emerge from observable-level
  admissibility constraints, rather than from foundational postulates
  about the underlying system, arise in structurally analogous ways
  in formal frameworks far removed from
  cybersecurity~\citep{oliveira2026}.
\end{remark}

\begin{hypothesis}[Asymptotic independence]
\label{hyp:independence}
  Cryptographic architecture ($V$) and external attack-surface
  accessibility ($E$) are driven by structurally distinct processes.
  We treat them as approximately independent in the cross-sectional
  distribution.
\end{hypothesis}

Independence can fail within tightly integrated supply chains,
where an organization's internal cryptographic posture correlates
with its external exposure~\citep{kunreuther2003}.
Section~\ref{sec:limitations} returns to this point.

\begin{hypothesis}[Proportional hazards composition]
\label{hyp:composition}
  Given that the CRQC has arrived, the probability that the attacker
  exploits the harvested data before the defender renders it
  strategically valueless is
  \begin{equation}
    P(\text{exploit} \mid \text{CRQC}) = \frac{\lambda_A}{\lambda_A + \lambda_D}.
    \label{eq:csf}
  \end{equation}
\end{hypothesis}

Equation~\eqref{eq:csf} is the Tullock Contest Success
Function~\citep{tullock1980}: the attacker wins in proportion to their
relative rate.
It is structurally analogous to the Gordon-Loeb~\citep{gordon2002}
framework, in which the marginal benefit of security investment equals
its marginal cost --- here operationalized as the ratio of attack and
defense rates rather than as an investment optimization.

\section{Main Result}
\label{sec:main}
\begin{figure*}[!t]
  \centering
  \includegraphics[width=\textwidth]{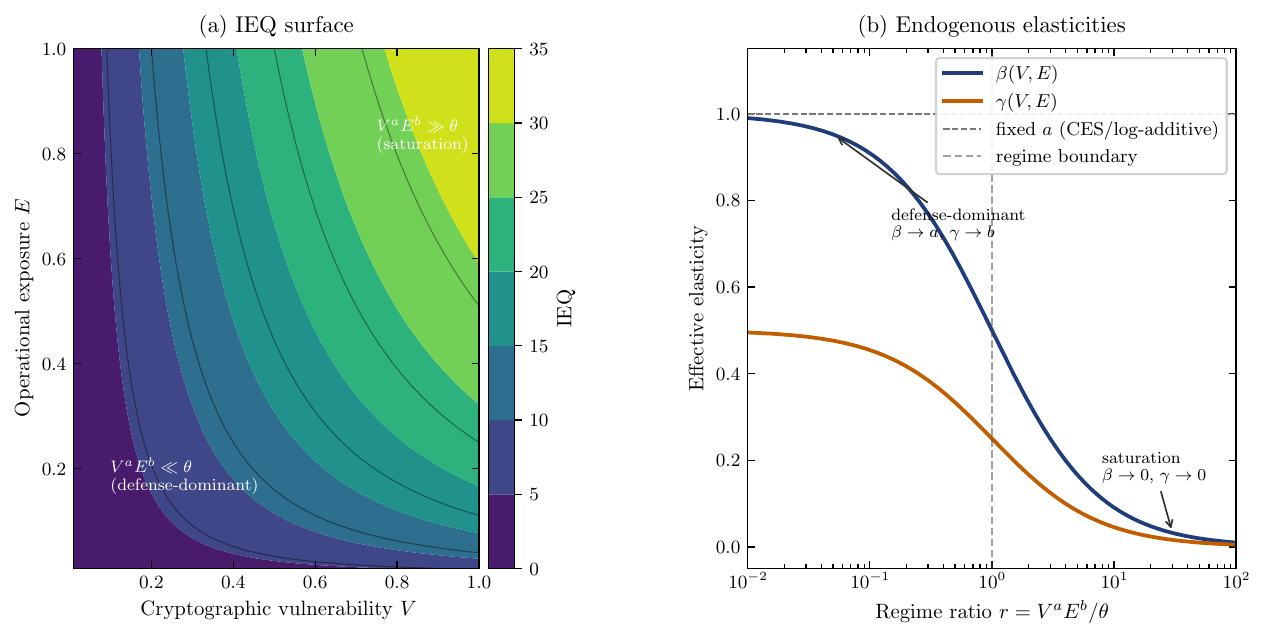}
  \caption{Structural properties of the IEQ under
    Hypotheses~1--3 with $a=1.0$, $b=0.5$, $\theta=1.0$,
    $H=0.6$, $M=1.15$.
    \textbf{(a)}~IEQ surface over the $(V,E)$ unit square
    (Eq.~\eqref{eq:IEQ}).
    Contour density reflects score sensitivity: high in the
    defense-dominant regime ($V^aE^b \ll \theta$, lower left)
    and low near saturation ($V^aE^b \gg \theta$, upper right).
    The observed range reflects the representative parameter
    $\theta=1.0$; the full $[0,100]$ scale is realized at lower
    $\theta$ values.
    \textbf{(b)}~Endogenous elasticities $\beta(V,E)$ and
    $\gamma(V,E)$ (Eq.~\eqref{eq:elasticities}) as functions of
    the regime ratio $r = V^aE^b/\theta$.
    The dashed horizontal line shows the fixed elasticity assumed
    by CES and log-additive specifications; the multiplicative
    family predicts elasticities that decrease continuously from
    their prior values ($r \ll 1$) to zero ($r \gg 1$),
    consistent with regime duality (Theorem~\ref{thm:main}).}
  \label{fig:structural}
\end{figure*}

Before stating the theorem, we define the temporal component of exposure.

\begin{definition}[Temporal hazard $H$]
  Let $F_q(t, \mu_s)$ be the logistic CDF of $T_Q$ parametrized by
  sector-specific median maturity year $\mu_s$~\citep{mosca2018} and
  slope $k = \ln(9)/10$ (chosen so that $F(\mu_s + 10) \approx 0.9$
  and $F(\mu_s - 10) \approx 0.1$, placing approximately $80\%$ of
  probability mass within a 20-year interval centred on the sector median).
  The \emph{temporal hazard} is
  \begin{equation}
    H \;=\; F_q(t_0 + T_D,\, \mu_s)
      \;=\; \frac{1}{1 + e^{-k(t_0 + T_D - \mu_s)}}.
    \label{eq:H}
  \end{equation}
  $H$ is the marginal probability that the CRQC arrives within the
  data's adversarial horizon.
  It depends on the time horizon $(t_0, T_D, \mu_s)$ but not on
  the organization's cryptographic posture $(V, E)$.
\end{definition}

\begin{theorem}[Model-implied form of an HNDL exposure score]
\label{thm:main}
  Under Hypotheses~\ref{hyp:poisson}--\ref{hyp:composition},
  the HNDL compromise probability factors as
  \begin{equation}
    P_{\mathrm{HNDL}} \;=\; H \cdot \frac{V^a E^b}{V^a E^b + \theta},
    \qquad \theta = \frac{\mu}{\lambda_0}.
    \label{eq:P}
  \end{equation}
  In a neighborhood of any operating point $(V_0, E_0)$, the
  log-linear approximation gives
  \begin{equation}
    \ln P_{\mathrm{HNDL}} \;\approx\; \ln H + \beta\,\ln V + \gamma\,\ln E + c,
    \label{eq:loglinear}
  \end{equation}
  where the \emph{elasticities} $\beta$ and $\gamma$ are endogenous:
  \begin{equation}
    \beta(V,E) = \frac{a\,\theta}{V^a E^b + \theta},
    \qquad
    \gamma(V,E) = \frac{b\,\theta}{V^a E^b + \theta}.
    \label{eq:elasticities}
  \end{equation}
  An exposure index consistent with
  Hypotheses~\ref{hyp:poisson}--\ref{hyp:composition} is therefore
  \begin{equation}
      \mathrm{IEQ}
      \;=\; 100 \cdot \min\!\bigl(1,\; H' \cdot V'^{\,\beta(V,E)}
        \cdot E'^{\,\gamma(V,E)} \cdot M\bigr),
      \label{eq:IEQ}
  \end{equation}
  where primes denote $\max(\cdot,\varepsilon)$ floors and $M \geq 1$
  is a governance penalty multiplier encoding qualitative risk factors
  (e.g., regulatory non-compliance, absence of cryptographic inventory)
  not captured by the observable signals $V$ and $E$.
  The index $\mathrm{IEQ}$ is an operational prioritisation index
  derived from a local log-linear approximation to
  $P_{\mathrm{HNDL}}$.
  Within fixed $(H,\theta,M)$ and floor regimes it preserves the
  intended structural ordering; it should not be interpreted as a
  globally calibrated probability or as a global scalar transform of
  $P_{\mathrm{HNDL}}$ across heterogeneous sectors and governance
  states.
\end{theorem}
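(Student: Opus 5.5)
The factorisation comes from conditioning on the CRQC arrival time, and the rest is a direct computation. First, write the compromise event as the intersection of two events. One is $\{T_Q \le t_0 + T_D\}$: the CRQC arrives within the adversarial horizon. The other is the event that the attacker wins the contest of Hypothesis~\ref{hyp:composition}. The contest runs between the two exponential clocks of Hypothesis~\ref{hyp:poisson}, with rates $\lambda_A = \lambda_0 V^a E^b$ and $\lambda_D = \mu$.

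The first step is to argue that these two events are independent, so that
\[
P_{\mathrm{HNDL}} = P(T_Q \le t_0+T_D)\cdot P(\text{exploit}\mid\text{CRQC}).
\]
By the Definition of $H$, the first factor is exactly $H$, which depends only on $(t_0,T_D,\mu_s)$. The second factor is given by Eq.~\eqref{eq:csf} and depends only on $(V,E,\lambda_0,\mu)$. The justification is that, by memorylessness of the exponential clocks in Hypothesis~\ref{hyp:poisson}, the race outcome does not depend on when the race starts. It is also driven by organisational quantities that Hypothesis~\ref{hyp:independence} and the Definition of $H$ separate from the sector-level arrival law.

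Substituting the rates into Eq.~\eqref{eq:csf} and dividing numerator and denominator by $\lambda_0$ gives Eq.~\eqref{eq:P} with $\theta=\mu/\lambda_0$. I expect this independence step to be the main obstacle. Conditioning on ``CRQC arrived'' makes sense only if the contest clock is started at (or is invariant to) $T_Q$. I would make that explicit by invoking memorylessness, rather than leaving it to the modelling hypotheses.

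Second, the elasticities follow from differentiating $\ln P_{\mathrm{HNDL}} = \ln H + \ln u - \ln(u+\theta)$ with $u = V^aE^b$. Since $\partial \ln u/\partial \ln V = a$ and $\partial \ln u/\partial \ln E = b$,
\[
\frac{\partial \ln P}{\partial \ln V} = a\Bigl(1-\frac{u}{u+\theta}\Bigr) = \frac{a\theta}{u+\theta},
\]
and the same computation in $\ln E$ gives $b\theta/(u+\theta)$. This is Eq.~\eqref{eq:elasticities}. A first-order Taylor expansion of $\ln P$ in $(\ln V,\ln E)$ about $(\ln V_0,\ln E_0)$ then yields Eq.~\eqref{eq:loglinear}, where $c$ absorbs the zeroth-order terms evaluated at the operating point. $H$ enters with unit elasticity because it factors out exactly.

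Third, the IEQ is obtained by exponentiating the local approximation, so that $P \approx e^{c} H V^{\beta}E^{\gamma}$. Three operational modifications are then applied:
\begin{itemize}
\item floors $\max(\cdot,\varepsilon)$, which keep the logarithms finite;
\item the governance multiplier $M\ge 1$, which acts as a multiplicative correction outside the model;
\item the truncation $100\min(1,\cdot)$.
\end{itemize}
The ordering claim then reduces to monotonicity. With $(H,\theta,M)$ fixed and away from the floors, the map $(V,E)\mapsto V^{\beta(V,E)}E^{\gamma(V,E)}$ should be nondecreasing in each argument. I would check this by differentiating its logarithm, $\beta\ln V + \gamma\ln E$. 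This yields $\beta/V$ plus a term $\partial_V\beta\,\ln V + \partial_V\gamma\,\ln E$. That extra term is nonnegative, because $\beta$ and $\gamma$ decrease in $u$ while $\ln V,\ln E\le 0$.

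The surrounding statements — that IEQ is neither a calibrated probability nor a global transform of $P_{\mathrm{HNDL}}$ — are interpretive caveats. I would justify them only by noting that $\beta$ and $\gamma$ vary with $(V,E)$, so no single monotone scalar map relates IEQ to $P_{\mathrm{HNDL}}$ across different $\theta$ or $M$.
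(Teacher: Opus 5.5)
Your proposal is correct and is essentially the paper's proof: the paper writes $P_{\mathrm{HNDL}}=P(A)\,P(B\mid A)$ with $A=\{T_Q<t_0+T_D\}$, takes $P(B\mid A)=\lambda_A/(\lambda_A+\lambda_D)$ directly from Hypothesis~\ref{hyp:composition} (invoking exogeneity of $T_Q$ rather than memorylessness), then log-differentiates, Taylor-expands in $(\ln V,\ln E)$, and treats Eq.~\eqref{eq:IEQ} as an operational definition --- so the multiplication rule needs no independence of $A$ and $B$, and memorylessness could not stand in for Hypothesis~\ref{hyp:composition} anyway, because a remediation clock running from the harvest time $t_0$ would add a $T_Q$-dependent survival factor $e^{-\mu(T_Q-t_0)}$. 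Your coordinatewise monotonicity check goes beyond the paper, which gives no argument for the ordering claim, and it sharpens to exact order-equivalence: $\beta\ln V+\gamma\ln E=\theta\ln u/(u+\theta)$ with $u=V^aE^b$, which is strictly increasing for $u\in(0,1]$, so at fixed $(H,\theta,M)$, away from the floors and the cap, IEQ is a strictly increasing function of $P_{\mathrm{HNDL}}$, including on pairs that are not coordinatewise comparable.
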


\begin{remark}[Two levels of the model]
\label{rem:twolevel}
  The model operates at two distinct levels.
  The \emph{model-implied HNDL compromise probability} is:
  \[
    P_{\mathrm{HNDL}} = H \cdot \frac{V^a E^b}{V^a E^b + \theta}.
  \]
  The \emph{operational exposure index} is:
  \[
    \mathrm{IEQ} = 100 \cdot
      \min\!\bigl(1,\;\varphi(P_{\mathrm{HNDL}},M)\bigr),
  \]
  where $\varphi$ encodes three implementation-specific
  transformations: (i)~$\max(\cdot,\varepsilon)$ floors on $H$, $V$,
  $E$ to prevent score collapse at boundary values;
  (ii)~local log-linearisation of $P_{\mathrm{HNDL}}$ via the
  endogenous elasticities $\beta(V,E)$, $\gamma(V,E)$ of
  Eq.~\eqref{eq:elasticities}, providing a computationally tractable
  per-organisation approximation; and (iii)~a governance penalty
  multiplier $M \geq 1$ encoding qualitative risk factors not captured
  by observable signals.
  The IEQ should not be interpreted as a calibrated probability once
  these transformations are applied.
  It is an operational prioritisation index whose qualitative ordering
  is locally consistent with $P_{\mathrm{HNDL}}$ within fixed
  $(H,\theta,M)$ and floor regimes; across heterogeneous sectors and
  governance states it does not constitute a global scalar transform
  of $P_{\mathrm{HNDL}}$.
  The structural analysis of
  Sections~\ref{sec:main}--\ref{sec:consequences} concerns
  $P_{\mathrm{HNDL}}$ directly; IEQ approximates its structural
  properties via the local log-linearisation of
  Eq.~\eqref{eq:elasticities}.
\end{remark}

\begin{proof}
  By Hypothesis~\ref{hyp:poisson}, exploitation and remediation are
  exponential races with rates $\lambda_A$ and $\lambda_D = \mu$
  from Eq.~\eqref{eq:rates}.
  Hypothesis~\ref{hyp:composition}, Eq.~\eqref{eq:csf}, gives
  \[
    P(\text{exploit} \mid \text{CRQC})
    = \frac{\lambda_A}{\lambda_A + \lambda_D}
    = \frac{V^a E^b}{V^a E^b + \theta},
  \]
  after dividing through by $\lambda_0$ and setting
  $\theta = \mu/\lambda_0$.
  We additionally assume $T_Q \perp (V, E)$: the arrival of a
  cryptographically relevant quantum computer is a global technological
  process exogenous to any single organization's cryptographic posture,
  consistent with the horizon analysis of \citet{mosca2018} and
  \citet{joseph2022}. Under this assumption:
  \begin{align*}
    P_{\mathrm{HNDL}}
    &= P(T_Q < t_0+T_D) \cdot P(\text{exploit} \mid T_Q < t_0+T_D) \\
    &= H \cdot \frac{V^a E^b}{V^a E^b + \theta},
  \end{align*}
  which is Eq.~\eqref{eq:P}.
  Differentiating the logarithm of Eq.~\eqref{eq:P}:
  \begin{align*}
    \frac{\partial \ln P}{\partial \ln V} &= \frac{a\,\theta}{V^a E^b + \theta}
    \equiv \beta(V,E), \\
    \frac{\partial \ln P}{\partial \ln E} &= \frac{b\,\theta}{V^a E^b + \theta}
    \equiv \gamma(V,E).
  \end{align*}
  These are the elasticities in Eq.~\eqref{eq:elasticities}.
  A first-order Taylor expansion of $\ln P$ in $(\ln V, \ln E)$ around
  $(V_0, E_0)$ recovers the log-linear approximation~\eqref{eq:loglinear}.
  Scaling by 100 and applying the governance multiplier $M$ and floor
  clips yield Eq.~\eqref{eq:IEQ}.
\end{proof}

The factorization $P_{\mathrm{HNDL}} = H \cdot P(\text{exploit} \mid \text{CRQC})$
follows from the law of total probability: $P(A \cap B) = P(A) \cdot P(B \mid A)$,
where $A = \{T_Q < t_0 + T_D\}$ is the event that the CRQC arrives within
the adversarial horizon and $B$ is the event that exploitation precedes
remediation. The exogeneity assumption $T_Q \perp (V,E)$ of
Hypothesis~\ref{hyp:independence} ensures that $P(B \mid A)$ depends only
on $\lambda_A$ and $\lambda_D$.

The Cobb-Douglas form in Eq.~\eqref{eq:loglinear} is not an estimated
production function; it is a local approximation derived directly from
the proportional hazards structure of Hypothesis~\ref{hyp:composition}.
The elasticities in Eq.~\eqref{eq:elasticities} vary continuously with
the realized $(V, E)$ pair, distinguishing this framework from the
standard econometric log-linear approximation and its critiques~\citep{shaikh1974}.

Figure~\ref{fig:structural} illustrates both consequences of
Eq.~\eqref{eq:P}.
Panel~(a) shows the IEQ surface from Eq.~\eqref{eq:IEQ}: the score is
most sensitive to changes in $V$ and $E$ in the defense-dominant region
(lower left, where $V^aE^b \ll \theta$) and saturates in the
attack-dominant region (upper right, where $V^aE^b \gg \theta$).
Panel~(b) shows $\beta(V,E)$ and $\gamma(V,E)$ from
Eq.~\eqref{eq:elasticities} as functions of the regime ratio
$r = V^a E^b / \theta$.
When $r \ll 1$ the elasticities equal their prior values $a$ and $b$;
when $r \gg 1$ they approach zero.
The dashed line in panel~(b) marks the constant elasticity assumed by
CES and log-additive specifications --- a special case that
Eq.~\eqref{eq:elasticities} reaches only in the limit $\theta \to \infty$.

Under (A1)--(A3), the multiplicative power family is the canonical
specification for the aggregate effort $\lambda_A$.
CES and log-additive forms for the internal $V$--$E$ aggregation
are excluded by the intersection principle
(Hypothesis~\ref{hyp:poisson}): a cryptographic compromise requires
the \emph{simultaneous} presence of $V$ and $E$, and CES allows
unlimited substitution between them.
\citet{skaperdas1996} (Theorem~2) further constrains the contest
success function over $\lambda_A$ via axioms (A1)--(A3), but does not
by itself determine the internal $V$--$E$ decomposition.
Threshold forms do not satisfy (A3): they introduce superlinear returns at
the activation boundary.

Two structural properties of Eq.~\eqref{eq:P} are worth naming.

\textit{Regime duality.}
When $V^a E^b \ll \theta$ (defense dominates), the elasticities in
Eq.~\eqref{eq:elasticities} approach $\beta \to a$ and $\gamma \to b$:
the score is maximally sensitive to both dimensions.
When $V^a E^b \gg \theta$ (attack dominates), both elasticities approach
zero and the score saturates --- reducing $V$ or $E$ marginally yields
diminishing returns.
Figure~\ref{fig:structural}(b) makes this transition visible.

\textit{Migration efficiency.}
Organizations with higher current exposure face greater marginal risk
from delayed migration.
From Eq.~\eqref{eq:P}, the marginal sensitivity of $P_{\mathrm{HNDL}}$
to $V$ and $E$ is highest when $V^aE^b \ll \theta$ (defense dominates)
and approaches zero at saturation, a direct consequence of the contest
structure in Eq.~\eqref{eq:csf}: migrating early, while the organization
is in the high-sensitivity regime, yields the greatest marginal return.

\section{Consequences}
\label{sec:consequences}

\begin{corollary}[Loss of interaction structure under additive separability]
\label{cor:degenerate}
  Any purely additive separable exposure score of the form
  $S = \sum_i w_i x_i$ (additive composite) or
  $S = f(\mathrm{rank}(\mathbf{x}))$ (ordinal) uses fixed marginal
  contributions that are independent of the organization's position in
  the $(V,E)$ plane, eliminating the $V \times E$ interaction and
  treating vulnerability and accessibility as substitutes rather than
  complements in attack production.
  No such score preserves the interaction structure induced by
  Hypotheses~1--3 within a proportional-hazards HNDL framework.
\end{corollary}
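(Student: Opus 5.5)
The argument compares the mixed second derivatives of the two kinds of score. We first make precise what "interaction structure induced by Hypotheses~1--3" means, using Theorem~\ref{thm:main}.

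\textbf{Interaction in the model.} Take $P_{\mathrm{HNDL}} = H\,u/(u+\theta)$ with $u = V^aE^b$, as in Eq.~\eqref{eq:P}. Two properties capture the interaction.
\begin{itemize}
\item[(i)] Complementarity at the boundary: $P_{\mathrm{HNDL}} \to 0$ whenever either $V \to 0$ or $E \to 0$, with the other held fixed. The marginal effect $\partial P/\partial V$ also vanishes as $E \to 0$.
\item[(ii)] Position dependence: the marginal contributions $\partial P/\partial V$ and $\partial P/\partial E$ vary with the location $(V,E)$. Equivalently, the elasticities $\beta(V,E)$ and $\gamma(V,E)$ of Eq.~\eqref{eq:elasticities} are non-constant functions of $u/\theta$.
\end{itemize}
A short computation makes this concrete. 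Write
\[
\frac{\partial P}{\partial V} = H\,\frac{\theta\, aV^{a-1}E^b}{(u+\theta)^2}.
\]
This depends on $E$, and $\partial^2 P/\partial V\,\partial E$ is not identically zero. Its sign is that of
\[
H a b V^{a-1}E^{b-1}\,\theta(\theta-u)/(u+\theta)^3,
\]
so it is positive in the defense-dominant regime $u<\theta$. At the level of attack production, Hypothesis~\ref{hyp:poisson} already gives $\partial^2\lambda_A/\partial V\,\partial E>0$ everywhere.

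\textbf{Additive scores.} Let $S = \sum_i w_i x_i$, where $V$ and $E$ enter as coordinates, or more generally through separable terms $g(V)+h(E)$. Then $\partial S/\partial V$ does not depend on $E$, so $\partial^2 S/\partial V\,\partial E \equiv 0$ on the whole unit square. This directly contradicts (ii). It also contradicts (i): as $E \to 0$ with $V$ fixed, $S$ stays at least $w_V V > 0$ (for $w_V>0$), whereas $P_{\mathrm{HNDL}}\to 0$. No choice of weights changes this, since the cross-partial vanishes identically whatever the $w_i$. The same reasoning shows substitutability: along a level set of $S$, a deficit in $E$ is compensated at the constant rate $w_E/w_V$, which is full compensability. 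By contrast, the level sets of $P$ have rates of substitution $(a/b)(E/V)$ that blow up at the axes.

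\textbf{Ordinal scores.} For $S = f(\mathrm{rank}(\mathbf{x}))$, we read the statement as ranks applied per dimension and then combined additively, e.g.\ $\sum_i w_i\,\mathrm{rank}(x_i)$. This reduces to the previous case with $g$ and $h$ monotone step functions, so the separability argument applies verbatim, using finite differences in place of derivatives. Concretely, the mixed difference $S(V_2,E_2)-S(V_1,E_2)-S(V_2,E_1)+S(V_1,E_1)$ vanishes identically. For $P_{\mathrm{HNDL}}$, by (i), this difference can be made strictly positive: take $E_1\to 0$ and $u$ small, where the continuous cross-partial is positive.

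\textbf{Main obstacle.} The difficulty is interpretive: the ordinal case is stated loosely. A fully general $f(\mathrm{rank}(\mathbf{x}))$ could apply a nonlinear $f$ to joint ranks and thereby reintroduce interaction. We therefore confine the claim to rank-then-add (separable) aggregation, which is the form the corollary's "fixed marginal contributions" phrasing presupposes. A monotone transform $f$ of an additive rank score cannot restore (i), because the zero set of $P_{\mathrm{HNDL}}$ (the axes) is not a level set of any separable score that is strictly increasing in both arguments.
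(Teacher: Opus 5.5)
Your core argument is correct and is essentially the paper's. A separable score has $\partial^2 S/\partial V\,\partial E\equiv 0$ (or vanishing mixed differences in the rank-then-add case), while $P_{\mathrm{HNDL}}$ has mixed partial $Hab\theta V^{a-1}E^{b-1}(\theta-u)/(u+\theta)^3$, which is not identically zero. The paper adds the log--log cross-partial $-ab\theta u/(u+\theta)^2$, which is strictly negative everywhere and so has no exceptional curve $u=\theta$. You instead add the boundary-vanishing and compensability observations, and a finite-difference treatment of ranks that is more careful than the paper's one-line remark on ordinal scores.

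However, the last sentence of your ``Main obstacle'' paragraph is false. $P_{\mathrm{HNDL}}$ is itself a strictly increasing transform of a separable score that is strictly increasing in both arguments. Take $s=a\ln V+b\ln E$ and $\varphi(s)=He^{s}/(e^{s}+\theta)$. Then $P_{\mathrm{HNDL}}=\varphi(s)$ exactly, including your property (i). Its mixed partial is $\varphi''(s)\,(a/V)(b/E)$, which reproduces the formula above. The axes are just the ``level set'' $s=-\infty$, so your argument works only when $g,h$ are bounded below (linear composites of $V,E\in(0,1]$, or scores linear in rank fractions). Ranks do not rescue it either: with uniform marginals, $\varphi$ applied to $a\ln F_V(V)+b\ln F_E(E)$, where $F_V,F_E$ are the rank fractions, returns $P_{\mathrm{HNDL}}$ in the continuum limit. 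For the same reason, your boundary argument (i) covers only the linear case, not the general $g(V)+h(E)$ you introduced.

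The underlying point is that $P_{\mathrm{HNDL}}$ is ordinally equivalent to the log-additive score $a\ln V+b\ln E$. The corollary is therefore necessarily a cardinal statement about $S$ itself in natural coordinates, which is how the paper's proof phrases it. No zero-set or ordinal property can separate $P_{\mathrm{HNDL}}$ from monotone transforms of additive scores. Drop that sentence or restrict it to $g,h$ bounded below; your main argument does not need it. (Minor: the substitution rate $(a/b)(E/V)$ blows up only as $V\to 0$; it tends to $0$ as $E\to 0$.)
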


\begin{proof}
  For any additive $S$, the cross-partial in natural coordinates
  $\partial^2 S / \partial V \partial E = 0$: the marginal contribution
  of $V$ is independent of $E$ by construction.
  For $P_{\mathrm{HNDL}}$, direct computation from Eq.~\eqref{eq:P}
  gives
  \begin{equation*}
    \frac{\partial^2 P_{\mathrm{HNDL}}}{\partial V\,\partial E}
    = \frac{H\,ab\,\theta\,V^{a-1}E^{b-1}\,\bigl(\theta - V^aE^b\bigr)}
           {\bigl(V^aE^b + \theta\bigr)^3},
  \end{equation*}
  which is nonzero wherever $V^aE^b \neq \theta$.
  Therefore no additive score can simultaneously match $\partial P/\partial V$
  and $\partial P/\partial E$ across the full $(V,E)$ domain.
  The log-log cross-partial,
  \begin{equation}
    \frac{\partial^2 \ln P}{\partial \ln V \,\partial \ln E}
    = -\frac{ab\theta\, V^a E^b}{(V^a E^b + \theta)^2},
    \label{eq:crosspartial}
  \end{equation}
  is strictly negative for all finite positive $a, b, V, E, \theta$:
  the endogenous elasticities $\beta(V,E)$ and $\gamma(V,E)$ depend on
  the organization's position in the $(V,E)$ plane and cannot be
  reproduced by the fixed weights of any additive composite.
  An ordinal transformation preserves the sign of first derivatives
  but cannot recover the interaction structure of
  Eq.~\eqref{eq:crosspartial}.
  In the composite-indicator literature, this is called
  \emph{full compensability}~\citep{oecd2008}: a deficit in one dimension
  is perfectly offset by a surplus in another.
  Under the HNDL model, $V$ and $E$ are complements in attack production ---
  full compensability conflicts with Hypothesis~\ref{hyp:poisson}.
\end{proof}

The opposing signs of the two cross-partials are not inconsistent,
but reflect the distinction between the attack-production technology
and the resulting compromise probability. The positive cross-partial
$\partial^2\lambda_A/\partial V\,\partial E > 0$ confirms structural
complementarity in attack generation, consistent with the
multiplicative contest structure of \citet{tullock1980} and
\citet{skaperdas1996}. The negative cross-partial
$\partial^2\ln P_{\mathrm{HNDL}}/\partial\ln V\,\partial\ln E < 0$
emerges from the Tullock denominator as a saturation effect: as $V$
and $E$ increase jointly, $P_{\mathrm{HNDL}}$ approaches its upper
bound $H$, reducing the marginal sensitivity of the score in a manner
analogous to the diminishing returns to security investment identified
by \citet{gordon2002}. The log-probability is
therefore the correct analytic object for evaluating the exposure
interactions in Corollary~\ref{cor:degenerate}.

The practical consequence is direct: a framework that scores
cryptographic vulnerability independently of data exposure will
produce both false positives and false negatives.
The error is structural, not numerical.

\section{Structural Specification Diagnostics}
\label{sec:empirical}

The operational implementation maps heterogeneous observable and
declarative signals into the structural variables $V$ and $E$.
Those implementation details are product-specific and commercially
sensitive; the present paper concerns the structural form of the
exposure model rather than the full measurement pipeline.

The framework was instantiated over $N \approx 40{,}000$ organizations
using automated external observation as the source of $(V, E)$ signals.
Signal influence was measured via the Sobol total-effect
index~\citep{campolongo2000,saisana2005,becker2017,nguyen2025}, which accounts for interaction
effects under the multiplicative structure of Eq.~\eqref{eq:P}.
Uncertainty was assessed by Monte Carlo perturbation.

A penalized spline~\citep{hastie2009} fitted to all input signals (Delta Layer) found no
stable residual structure beyond the structural model, consistent with
Eq.~\eqref{eq:P} capturing the dominant variance.

Non-nested model comparison via the Vuong statistic~\citep{vuong1989}
was applied against CES, log-additive, and threshold alternatives as
an internal specification check: the comparison tests whether the
scoring pipeline is self-consistent with the structural form of
Eq.~\eqref{eq:P}, not whether the score predicts ground-truth HNDL
outcomes (which are unobservable in the pre-CRQC regime).
The observed log-cross-partial
$\partial^2\ln\hat{P}_{\mathrm{HNDL}}/\partial\ln V\,\partial\ln E < 0$
is consistent with the Tullock-denominator saturation structure of
Eq.~\eqref{eq:crosspartial}; this is a saturation effect in the
compromise probability, not a sign reversal of the positive
cross-partial of the attack-generation technology $\lambda_A$
(Section~\ref{sec:consequences}).
The CES form yielded a positive log-cross-partial, inconsistent with
Eq.~\eqref{eq:crosspartial}, confirming the structural
distinction identified in Corollary~\ref{cor:degenerate}.%
\footnote{Vuong test results and cross-partial estimates are reported
in full in the Supplementary Technical Material (Section~11).}

Operational calibration details are outside the scope of this
structural analysis.
This framing follows the epistemological standard of
\citet{gordon2002}, who acknowledge that empirical determination
of threat probabilities and potential loss is outside the scope of
their structural model, and of the OECD composite-indicator
framework~\citep{oecd2008}, which distinguishes internal structural
consistency from causal identification.

\section{Limitations}
\label{sec:limitations}

Four limitations are worth stating plainly.

\textit{Poisson approximation.}
Hypothesis~\ref{hyp:poisson} models attack and defense as
constant-rate exponential processes.
Correlated attack campaigns violate the independence of increments.

\textit{Independence of $V$ and $E$.}
Hypothesis~\ref{hyp:independence} holds asymptotically in large
heterogeneous populations but fails for tightly integrated sectors.
Supply-chain interdependencies~\citep{kunreuther2003} can correlate
internal cryptographic posture with external exposure.
The empirical Spearman $\rho(V,E) = 0.078$ ($p = 0.001$) supports
approximate independence at the population level; localized signal
pairs can exhibit substantially higher correlation.

\textit{Adversarial shelf life $T_D$.}
This is a sector-level prior, not an auditable metric.
\citet{mosca2018} treats it as a strategic horizon estimate;
\citet{joseph2022} document that assets such as trade
secrets and medical records carry multi-decade shelf lives, but the
uncertainty is inherent.

\textit{No ground truth.}
There is no dataset of confirmed HNDL exploitations.
Absolute calibration of Eq.~\eqref{eq:P} is infeasible in the
pre-CRQC regime.
Signal influence is therefore assessed against the internal variance
structure of observable signals rather than against confirmed outcomes.
This target-free evaluation is methodologically consistent with the
sensitivity-based weighting framework of \citet{nguyen2025}, in which
weights are derived entirely from the distribution of inputs rather than
from labelled outcomes or an external target variable.
The contribution is structural: the form of the score, not specific
calibrated constants~\citep{gordon2002}.

\section{Conclusion}
\label{sec:conclusion}

The HNDL threat has a precise structure.
An adversary needs two things at once: a quantum-vulnerable cipher ($V$)
and access to the traffic ($E$).
The two are complements, not substitutes.

Three axioms about how attack rates are produced imply the exposure
probability takes the form of Eq.~\eqref{eq:P}
(Theorem~\ref{thm:main}).
The elasticities in Eq.~\eqref{eq:elasticities} are endogenous:
they depend on where the organization sits in the vulnerability-exposure
plane, not on global constants.
Figure~\ref{fig:structural} shows what this looks like across the full
$(V,E)$ operating range.

Two consequences follow directly.
First, additive and ordinal scores cannot reproduce the $V \times E$
interaction structure (Corollary~\ref{cor:degenerate}, Eq.~\eqref{eq:crosspartial}).
Recalibration alone cannot restore the missing interaction term; it is absent by construction in any separable form.
Second, migration efficiency is highest at low current exposure
(regime duality of Eq.~\eqref{eq:elasticities}): start early, while
the organization is in the high-sensitivity regime.

An internal specification diagnostic over $\approx 40{,}000$ organizations
confirmed that the scoring pipeline is self-consistent with the
structural form: the observed log-cross-partial had the sign predicted
by Eq.~\eqref{eq:crosspartial}, and the CES alternative yielded the opposite sign.

Natural extensions include network-contagion models relaxing axiom~(A2)
for supply-chain-integrated sectors, and empirical estimation of
$\theta = \mu/\lambda_0$ from observed PQC migration rates.

\section*{CRediT authorship contribution statement}
\textbf{Matheus Rufino:} Conceptualization, Methodology, Writing -- original draft.
\textbf{Rafael Duarte Marcelino:} Software, Investigation, Formal analysis,
Data curation, Validation.
\textbf{Julio Smanioto Garcia:} Visualization, Writing -- review \& editing.

\section*{Declaration of competing interests}
The authors are co-founders of GWK Security, which develops cybersecurity
measurement products based on the IEQ methodology presented in this paper.
The formal model, axioms, and proofs reported here are independent of
any customer-identifying or proprietary implementation data.
The authors declare no other competing financial interests or personal
relationships that could have appeared to influence the work reported
in this paper.

\section*{Acknowledgements}
This research did not receive any specific grant from funding agencies
in the public, commercial, or not-for-profit sectors.

\section*{Declaration of generative AI and AI-assisted technologies
in the manuscript preparation process}
During the preparation of this work the authors used GitHub Copilot
(GitHub, Inc.) for \LaTeX{} formatting and code-editing assistance.
No AI system generated, accessed, or interpreted proprietary
operational data.
The authors reviewed and edited all content as needed and take full
responsibility for the content of the published article.

\section*{Data availability}
The operational signals used in the specification diagnostic
(Section~\ref{sec:empirical}) are derived from automated external
observation of live organizations and contain commercially sensitive
information that cannot be made publicly available.
No customer-identifying, organization-level, domain-level, or
signal-level data are reported in this paper.
The formal model, axioms, proofs, and all equations are fully
self-contained and reproducible from the material presented in
this paper.

\bibliographystyle{cas-model2-names}
\bibliography{ieq_paper}

\end{document}